\theoremstyle{plain}
\newtheorem{thm}{\protect\theoremname}
\theoremstyle{definition}
\newtheorem{defn}[thm]{\protect\definitionname}
\theoremstyle{plain}
\newtheorem{prop}[thm]{\protect\propositionname}
\newenvironment{proof}[1][\protect\proofname]{\par
\normalfont\topsep6\p@\@plus6\p@\relax
\trivlist
\itemindent\parindent
\item[\hskip\labelsep
\scshape
#1]\ignorespaces
}{%
\endtrivlist\@endpefalse
}
\definecolor{webgreen}{rgb}{0,.5,0}
\definecolor{webbrown}{rgb}{.6,0,0}
\definecolor{webyellow}{rgb}{0.98,0.92,0.73}
\definecolor{webgray}{rgb}{.753,.753,.753}
\definecolor{webblue}{rgb}{0,0,.8}
\providecommand{\definitionname}{Definition}
\providecommand{\proofname}{Proof}
\providecommand{\propositionname}{Proposition}
\providecommand{\theoremname}{Theorem}
\begin{document}

\title{Escort entropies and divergences and related canonical distribution}

\author{J.-F. Bercher}

\ead{jf.bercher@esiee.fr}

\address{Université Paris-Est, LIGM, UMR CNRS 8049, ESIEE-Paris\\
 5 bd Descartes, 77454 Marne la Vallée Cedex 2, France}
\begin{abstract}
We discuss two families of two-parameter entropies and divergences,
derived from the standard Rényi and Tsallis entropies and divergences.
These divergences and entropies are found as divergences or entropies
of escort distributions. Exploiting the nonnegativity of the divergences,
we derive the expression of the canonical distribution associated
to the new entropies and a observable given as an escort-mean value.
We show that this canonical distribution extends, and smoothly connects,
the results obtained in nonextensive thermodynamics for the standard
and generalized mean value constraints. \end{abstract}

\begin{keyword}
Generalized Rényi and Tsallis entropies \sep $q$-gaussian distributions\sep
Escort distributions

\PACS {02.50.-r} \sep {05.90.+m} \sep {89.70.+c} 

\end{keyword}
\maketitle
\date{August 2011}

\section{Introduction}

Rényi and Tsallis entropies extend the standard Shannon-Boltzmann
entropy, enabling to build generalized thermostatistics, that include
the standard one as a special case. The thermodynamics derived from
Tsallis entropy, the nonextensive thermodynamics, has received a high
attention and there is a wide variety of applications where experiments,
numerical results and analytical derivations fairly agree with the
new formalisms \citep{tsallis_introduction_2009}. Some physical applications
of the generalized entropies, --\,\,including statistics of cosmic
rays, defect turbulence, optical lattices, systems with long-range
interactions, superstatistical systems, etc., can be found in the
recent review \citep{beck_generalised_2009} and references therein.
In the extended thermodynamics, it has been found particularly useful
to use generalized moments \citep{abe_necessity_2005,tsallis_escort_2009}.
These moments are computed with respect to a deformed version of the
density at hands, which is called its escort distribution \citep{chhabra_direct_1989,beck_thermodynamics_1993}.
Actually, several type of constraints \citep{tsallis_role_1998,martinez_tsallis_2000,ferri_equivalence_2005}
have been used in order to derive the canonical distributions: a first
type of constraints is expressed as standard linear mean values while
a second type of constraints is given as generalized escort mean values.
In the two cases, the related canonical distributions are expressed
as a $q$-Gaussian distribution, but with two opposite exponents,
and both solutions reduce to a standard Gaussian distribution in the
$q=1$ case. These $q$-Gaussian distributions can exhibit a power-law
behavior, with a remarkable agreement with experimental data, see
for instance \citep{tsallis_introduction_2009,hilhorst_noteq-gaussians_2007,vignat_isdetection_2009},
and references therein. These distributions are also analytical solutions
of actual physical problems, e.g. \citep{lutz_anomalous_2003,schwaemmle_q-gaussians_2008,ohara_information_2010}.
There has been numerous discussions regarding the choice of a `correct'
form of the constraints, either as a standard mean or an escort-average,
and on the connections between the solutions and associated thermodynamics.
In particular, dualities and equivalences between the two settings
have been described. 

This is precisely the context of the present Letter, where we propose
a simple  connection between these different formulations and between
the related canonical distributions. More specifically, we suggest
a simple way to combine the originally distinct concepts of entropy
and escort distributions into a single two-parameter ($a,\lambda$)-entropy.
Then, we propose to look at an associated extended maximum entropy
problem. This approach includes the two aforementioned formulations
as special cases. Exploiting the nonnegativity of the associated divergence,
we derive the expression of the canonical distribution for an observable
given as an escort-mean value. We show that this canonical distribution
extends, and smoothly connects, the results obtained in nonextensive
thermodynamics for the standard and generalized mean value constraints.

We begin by recalling the context and definitions in section \ref{sec:Context-and-definitions}.
Then, the combined escort divergences and entropies are introduced
and discussed in section \ref{sec:The-a-lambda-divergences}. The
related maximum entropy problem and its solution are described in
section \ref{sec:The-a-lambda-maximum}. Finally, in section \ref{sec:The-case-of},
we illustrate the results in the case of a two-level system.

\section{\label{sec:Context-and-definitions}Context and definitions}

\subsection{Main definitions}

Let us recall that if $f$ and $g$ are two probability densities
defined with respect to a common measure $\mu,$ then for a parameter
$q>0,$ called the entropic index, the Tsallis information divergence
is defined by 
\[
D_{q}^{(T)}(f||g)=\frac{1}{q-1}\left(\int f(x)^{q}g(x)^{1-q}\text{d\ensuremath{\mu}(}x)-1\right),
\]
and similarly, the Rényi information divergence is defined by
\begin{equation}
D_{q}^{(R)}(f||g)=\frac{1}{q-1}\log\int f(x)^{q}g(x)^{1-q}\text{d\ensuremath{\mu}(}x),\label{eq:DefR=0000E9nyiDivergence}
\end{equation}
provided, in both cases, that the integral is finite. By l'Hospital's
rule, both Tsallis and Rényi information divergences reduce to the
Kullback-Leibler information in the case $q=1$. Associated to these
divergences, we obtain the entropies when $g(x)$ is uniform with
respect to $\mu$. For $q\neq1,$ 
\begin{equation}
S_{q}[f]=\frac{1}{1-q}\left(\int f(x)^{q}\text{d}\mu(x)-1\right),\label{eq:DefTsallisEntropy}
\end{equation}
is the Tsallis entropy, and 
\begin{equation}
H_{q}[f]=\frac{1}{1-q}\log\int f(x)^{q}\text{d}\mu(x),\label{eq:DefRenyiEntropy}
\end{equation}
is the Rényi entropy. In both cases, we have $S_{1}[f]=H_{1}[f]=-\int f\log f\,\text{d}\mu(x)$,
the Shannon entropy, for $q=1.$ For the Lebesgue measure, the definitions
reduces to the standard differential entropies. Finally, in the discrete
case, the continuous sum is replaced by a discrete one which extends
on a subset $\mathcal{D}$ of integers, and $\mu$ is a counting measure,
usually taken as uniform. We will also denote 
\begin{equation}
M_{q}[f]=\int f(x)^{q}\text{d\ensuremath{\mu}(}x)\label{eq:InfoGeneratingFunctionDef}
\end{equation}
the integral of the density raised to the power $q.$ 

The escort distributions used in nonextensive statistics are defined
as follows. If $f(x)$ is an univariate probability density, then
its escort $f_{a}(x)$ of order $a$, $a\geq0,$ is given by
\begin{equation}
f_{a}(x)=\frac{f(x)^{a}}{\int f(x)^{a}\mathrm{d\mu(}x)},\label{eq:escort_f-1-1}
\end{equation}
provided that $M_{a}[f]=\int f(x)^{a}\text{d\ensuremath{\mu}(}x)$
is finite. Given that $f_{a}(x)$ is the escort of $f(x)$, we see
that $f(x)$ is itself the escort of order $1/a$ of $f_{a}(x)$.
Accordingly, the (absolute) generalized $a$-moment of order $p$
is defined by
\begin{equation}
m_{p,a}[f]=\int|x|^{p}f_{a}(x)\text{d}x=\frac{\int|x|^{p}f(x)^{a}\text{d\ensuremath{\mu}(}x)}{\int f(x)^{a}\text{d}\mu(x)}.\label{eq:DefGeneralizedMoment}
\end{equation}
Of course, standard moments are recovered in the case $a=1.$

In the context of nonextensive statistical mechanics, the parameter
$a$ is taken equal to the parameter $q$ of the $q$-entropy. We
shall also point out that the `deformed' information measure like
the Rényi entropy (\ref{eq:DefRenyiEntropy}) and the escort distribution
(\ref{eq:escort_f-1-1}) are originally two distinct concepts, as
indicated here by the different notations $a$ and $q$. There is
an interesting discussion on this point in \citep{pennini_semiclassical_2007}.
It is a contribution of the present paper to indicate that it is still
possible to adopt two different values for these parameters.

\subsection{The maximum entropy problems}

In nonextensive thermostatistics, the maximum $q$-entropy distributions,
which are identified as $q$-gaussian distributions, can be obtained
using a constraint expressed as a classical mean value; but it has
also been found adequate to use constraints given as escort-mean values,
with the same index $q$ as in the entropy. This is discussed in \citep{tsallis_role_1998}
or in \citep{abe_necessity_2005,tsallis_escort_2009}. In the first
case, the maximum entropy problem stands as the variational problem
\begin{equation}
H_{q}^{(1)}(m)=\max_{f}\left\{ H_{q}[f]:\, m_{p,1}[f]=m\,\text{ and }m_{0,1}[f]=1\right\} \label{eq:MaxEntPbStandardC}
\end{equation}
which consists in finding a distribution with maximum Rényi entropy
on the set of all probability distributions with a fixed moment of
order $p$: $m_{p,1}[f]=m.$ The value of the maximum entropy obtained
for a given moment is denoted $H_{q}(m)$ \textendash{} the use of
the square brackets and parenthesis distinguishes between the functions
of the state and the functions of the observable. Similarly, for a
constraint given as an escort-mean value, we have
\begin{equation}
H_{q}^{(q)}(m)=\max_{f}\left\{ H_{q}[f]:\, m_{p,q}[f]=m\,\text{ and }m_{0,1}[f]=1\right\} .\label{eq:MaxEntPbGeneralizedC}
\end{equation}
Note that although the problems above have been written for the Rényi
entropy, the very same problems can be written for Tsallis entropy;
actually the optimum distributions are identical and the optimum values
of both maximum entropy problems are easily related. 

In the case of the standard mean constraint, the density that achieves
the maximum entropy is given by the generalized $q$-Gaussian
\begin{equation}
f(x)=\frac{1}{Z(\beta)}\left(1-\left(q-1\right)\beta|x|^{p}\right)_{+}^{\frac{1}{q-1}}\label{eq:qGaussStandardC}
\end{equation}
while in the case of the escort mean constraint, the solution has
an exponent with opposite sign
\begin{equation}
f(x)=\frac{1}{Z(\beta)}\left(1-\left(1-q\right)\beta|x|^{p}\right)_{+}^{\frac{1}{1-q}}.\label{eq:qGaussGeneralizedC}
\end{equation}
In both cases, we used the notation $\left(x\right)_{+}=\mbox{max}\left\{ x,0\right\} ,$
and $\beta$ is a positive parameter chosen so as to satisfy the constraint.
The limit case $q\rightarrow1$ gives $f(x)\propto\exp\left(-\beta|x|^{p}\right).$
The partition functions $Z(\beta)$ used to normalize $f(x)$ as a
density will be given below.

\subsection{Dualities}

There has been a great deal of discussions concerning the relationships
between the two problems (\ref{eq:MaxEntPbStandardC}) and (\ref{eq:MaxEntPbGeneralizedC})
with standard and escort mean value constraints, and between their
solutions (\ref{eq:qGaussStandardC}) and (\ref{eq:qGaussGeneralizedC}).
Dualities between the solutions have been introduced: the $q\rightarrow2-q$
transformation \citep{baldovin_nonextensive_2004,wada_connections_2005}
enables to obtain the density (\ref{eq:qGaussGeneralizedC}) from
the maximization of the entropy $S_{2-q}[f]$ subject to standard
constraints rather than the maximization of $S_{q}[f]$ subject to
escort generalized constraints. The $q\rightarrow1/q$ duality rests
on the duality between the original density and its escort: as already
noted, if $f_{q}$ is the escort of order $q$ of $f$, then $f$
is itself the escort of order $1/q$ of $f_{q}.$ This has been mentioned
in \citep{tsallis_role_1998}, the related duality discussed in \citep{Raggio1999b}
and in \citep{naudts_dual_2002}. It is very easy to obtain the $q\rightarrow1/q$
duality as the consequence of the symmetry above for escort distributions:
it suffices to note that for the Rényi entropy, we \emph{always} have
the equality $H_{\frac{1}{q}}[f_{q}]=H_{q}[f].$ Therefore, noting
that we also have $m_{p,q}[f]=m_{p,1}[f_{q}],$ we see that the problem
\begin{equation}
H_{q}^{(q)}(m)=\max_{f}\left\{ H_{q}[f]:\, m_{p,q}[f]=m\,\text{ and }m_{0,1}[f]=1\right\} ,
\end{equation}
is exactly equivalent to 
\begin{equation}
H_{q}^{(q)}(m)=\max_{f}\left\{ H_{\frac{1}{q}}[f_{q}]:\, m_{p,1}[f_{q}]=m\,\text{ and }m_{0,1}[f]=1\right\} =H_{\frac{1}{q}}^{(1)}(m)
\end{equation}
where the maximum entropy is obtained for the density (\ref{eq:qGaussGeneralizedC}).
Therefore, it is always possible to swap the standard and generalized
constraints, provided that entropic index $q$ is changed into $1/q.$
We shall also mention that these dualities have been linked to combinatorial
considerations in \citep{suyari_multiplicative_2008}.

\section{\label{sec:The-a-lambda-divergences}The ($a,\lambda$) divergences
and entropies}

In complement to the dualities mentioned above, we will show that
there is a continuum of $q$-Gaussians, solutions of an extended maximum
entropy problem, that smoothly connects the problems (\ref{eq:MaxEntPbStandardC})
and (\ref{eq:MaxEntPbGeneralizedC}) and their solutions (\ref{eq:qGaussStandardC})
and (\ref{eq:qGaussGeneralizedC}). The basic idea is to mix the concepts
of $q$-entropies and escort distributions into a single quantity.
This leads us to a simple extension of the Rényi (Tsallis) information
divergence and entropy to a two parameters case. Interestingly, the
generalized ($a,\lambda$)-Rényi information divergence that emerges
in (\ref{eq:GeneralizedRenyiDivergence}) has been mentioned \citep[eq. (49)]{cichocki_generalized_2011}
as a possible extension of a new class of ($\alpha,\beta$)-divergences,
generalizing a family of Gamma-divergences recently introduced in
\citep{fujisawa_robust_2008}. 
\begin{defn}
Given two parameters $a$ and $\lambda>0$ and two densities $f$
and $g,$ we call ($a,\lambda$)-Rényi information divergence the
standard Rényi information divergence with index $q=a/\lambda$ between
the escort distributions like (\ref{eq:escort_f-1-1}), of order $\lambda,$
$f_{\lambda}$ and $g_{\lambda},$ associated to $f$ and $g:$ 
\begin{equation}
D_{a,\lambda}^{(R)}(f||g)=D_{\frac{a}{\lambda}}^{(R)}(f_{\lambda}||g_{\lambda}).\label{eq:GeneralizedRenyiDivergenceA}
\end{equation}
In the developed form, this gives 
\begin{equation}
D_{a,\lambda}^{(R)}(f||g)=\frac{1}{a-\lambda}\log\frac{\left[\int\ensuremath{f(x)^{a}g(x)^{\lambda-a}\text{d\ensuremath{\mu}(}x)}\right]^{\lambda}}{\left[\int\ensuremath{f(x)^{\lambda}\text{d}\mu(x)}\right]^{a}\,\left[\int\ensuremath{g(x)^{\lambda}\text{d}\mu(x)}\right]^{\lambda-a}},\label{eq:GeneralizedRenyiDivergence}
\end{equation}
provided that the different integrals exist. 

Similarly, we call ($a,\lambda$)-Tsallis information divergence the
quantity
\begin{equation}
D_{a,\lambda}^{(T)}(f||g)=\frac{1}{a-\lambda}\left(\frac{\left[\int\ensuremath{f(x)^{a}g(x)^{\lambda-a}\text{d\ensuremath{\mu}(}x)}\right]^{\lambda}}{\left[\int\ensuremath{f(x)^{\lambda}\text{d}\mu(x)}\right]^{a}\,\left[\int\ensuremath{g(x)^{\lambda}\text{d}\mu(x)}\right]^{\lambda-a}}-1\right)\label{eq:GeneralizedTsallisDivergence}
\end{equation}
provided that the different integrals exist. 
\end{defn}
These definitions include the Rényi (Tsallis) information divergence
$D_{a}^{(.)}(f||g)$ in the case $\lambda=1$. In the case $\lambda=a$,
again by l'Hospital's rule, (\ref{eq:GeneralizedRenyiDivergence})
becomes $D_{\lambda,\lambda}^{(.)}(f||g)=D(f_{\lambda}||g_{\lambda})$,
the Kullback-Leibler divergence between the escort distributions of
order $\lambda,$ $f_{\lambda}$ and $g_{\lambda},$ associated to
$f$ and $g.$ Finally, in the case $a=\lambda=1$, the generalized
Rényi (Tsallis) information divergence reduces to the standard Kullback-Leibler
divergence. 

Associated to this ($a,\lambda$)-Rényi information divergences, we
can also define a two parameters Rényi (Tsallis) entropy.
\begin{defn}
Given $a$ and $\lambda>0$ and a density $f$, we call ($a,\lambda$)-Rényi
entropy the entropy of order $q=a/\lambda$ of the escort $f_{\lambda}$:
\begin{equation}
H_{a,\lambda}[f]=H_{\frac{a}{\lambda}}[f_{\lambda}]=\frac{1}{\lambda-a}\log\left[\int\ensuremath{f(x)^{a}\text{d\ensuremath{\mu}(}x)}\right]^{\lambda}\left[\int\ensuremath{f(x)^{\lambda}\text{d\ensuremath{\mu}(}x)}\right]^{-a},\label{eq:GeneralizedRenyiEntropy}
\end{equation}
provided that the different integrals exist. Similarly, the ($a,\lambda$)-Tsallis
entropy is given by
\begin{equation}
S_{a,\lambda}[f]=S_{\frac{a}{\lambda}}[f_{\lambda}]=\frac{1}{\lambda-a}\left(\left[\int\ensuremath{f(x)^{a}\text{d\ensuremath{\mu}(}x)}\right]^{\lambda}\left[\int\ensuremath{f(x)^{\lambda}\text{d\ensuremath{\mu}(}x)}\right]^{-a}-1\right),\label{eq:GeneralizedTsallisEntropy}
\end{equation}

\end{defn}
 Again, for $\lambda=1$ we get the standard Rényi (Tsallis) entropy
of order $a$, for $a=1$ we get the standard Rényi (Tsallis) entropy
of order $\lambda,$ for $a=\lambda$ it reduces to the Shannon entropy
of the escort distribution, and to the standard Shannon entropy in
the case $a=\lambda=1.$ 

In the following, we will need the fact that the divergence $D_{a,\lambda}^{(.)}(f||g)$
is always nonnegative: 
\begin{prop}
if $f$ and $g$ are two densities such that the involved integrals
are finite, with $a$ and $\lambda>0$, then $D_{a,\lambda}^{(.)}(f||g)\geq0,$
with equality if and only if $f=g.$\end{prop}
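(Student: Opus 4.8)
The plan is to read the definitions (\ref{eq:GeneralizedRenyiDivergence}) and (\ref{eq:GeneralizedTsallisDivergence}) through the escort densities $p=f_{\lambda}$ and $r=g_{\lambda}$, which by (\ref{eq:escort_f-1-1}) are genuine probability densities as soon as $M_{\lambda}[f]$ and $M_{\lambda}[g]$ are finite and positive. Writing $q=a/\lambda>0$ and $I=\int p(x)^{q}r(x)^{1-q}\,\text{d}\mu(x)$, a direct computation shows that both $(a,\lambda)$-divergences are governed by $I$: the R\'enyi form is $D_{a,\lambda}^{(R)}(f||g)=\tfrac{1}{a-\lambda}\log(I^{\lambda})=D_{a/\lambda}^{(R)}(f_{\lambda}||g_{\lambda})$, while the Tsallis form is $D_{a,\lambda}^{(T)}(f||g)=\tfrac{1}{a-\lambda}(I^{\lambda}-1)$. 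Everything therefore reduces to controlling the sign of $I-1$.

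The single inequality I would prove is $\operatorname{sign}(I-1)=\operatorname{sign}(q-1)$, with $I=1$ precisely when $p=r$. This is the classical nonnegativity of the standard R\'enyi (Tsallis) divergence, which I would obtain from Jensen's inequality by writing $I$ as an expectation under $r$,
\[
I=\int\left(\frac{p(x)}{r(x)}\right)^{q}r(x)\,\text{d}\mu(x)=E_{r}\!\left[\left(\frac{p}{r}\right)^{q}\right],
\]
and using the convexity of $t\mapsto t^{q}$ for $q>1$, which gives $I\geq\bigl(E_{r}[p/r]\bigr)^{q}=\bigl(\int p\,\text{d}\mu\bigr)^{q}=1$, together with its concavity for $0<q<1$, which reverses the inequality to $I\leq 1$. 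The boundary value $q=1$, that is $a=\lambda$, is the Kullback--Leibler case already identified after (\ref{eq:GeneralizedRenyiDivergence}); there nonnegativity is Gibbs' inequality and is recovered by continuity. Equality in Jensen's inequality forces $p/r$ to be constant $r$-almost everywhere, and $\int p\,\text{d}\mu=\int r\,\text{d}\mu=1$ then forces $p=r$.

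It remains to assemble the pieces for both divergences at once. Since $\lambda>0$, the map $t\mapsto t^{\lambda}$ preserves the sign of $t-1$, so $\operatorname{sign}(I^{\lambda}-1)=\operatorname{sign}(I-1)=\operatorname{sign}(q-1)=\operatorname{sign}(a-\lambda)$; as both $\log(\cdot)$ and $(\cdot)-1$ are increasing and vanish at $1$, each of $\tfrac{1}{a-\lambda}\log(I^{\lambda})$ and $\tfrac{1}{a-\lambda}(I^{\lambda}-1)$ is a quotient of two quantities sharing a sign, hence nonnegative. The step I expect to require the most care is the equality clause: the above yields $I=1$ if and only if $f_{\lambda}=g_{\lambda}$, and to pass from this to $f=g$ one must invert the escort transformation, using that $f$ is the escort of order $1/\lambda$ of $f_{\lambda}$, as noted after (\ref{eq:escort_f-1-1}). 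Concretely, $f_{\lambda}=g_{\lambda}$ reads $f^{\lambda}/M_{\lambda}[f]=g^{\lambda}/M_{\lambda}[g]$, whence $f\propto g$, and normalization of both densities forces $f=g$, which closes the argument.
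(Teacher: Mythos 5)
Your proof is correct and follows essentially the same route as the paper: identify the $(a,\lambda)$-divergence with the standard R\'enyi divergence of index $q=a/\lambda$ between the escort distributions $f_{\lambda}$ and $g_{\lambda}$, and invoke Jensen's inequality. You are in fact somewhat more careful than the paper's one-line argument on two points -- the $(a,\lambda)$-Tsallis divergence is only a sign-preserving transform of (not literally equal to) the standard Tsallis divergence of the escorts, which your $\operatorname{sign}(I^{\lambda}-1)=\operatorname{sign}(I-1)$ step handles, and the equality clause requires inverting the escort map to pass from $f_{\lambda}=g_{\lambda}$ to $f=g$.
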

\begin{proof}
Obviously, once we have realized that $D_{a,\lambda}^{(.)}(f||g)$
in (\ref{eq:GeneralizedRenyiDivergence}) is the Rényi (Tsallis) divergence
between escort distributions, this is a direct consequence of the
nonnegativity of the Rényi (Tsallis) divergence, which follows from
Jensen inequality. Alternatively, this could also be derived from
the Hölder inequality. 
\end{proof}

\section{\label{sec:The-a-lambda-maximum}The ($a,\lambda$) maximum entropy
problem}

With the previous definitions at hands, we can now consider the following
extended maximum entropy problem:

\begin{equation}
H_{a,\lambda}(m)=\max_{f}\left\{ H_{a,\lambda}[f]:\, m_{p,a}[f]=m\,\text{ and }m_{0,1}[f]=1\right\} \label{eq:Generalized_a-lambda_problem}
\end{equation}
which includes the previous problems as particular cases. The usual
procedure for handling such variational problem is the technique of
Lagrangian multipliers, e.g. \citep{tsallis_role_1998,martinez_tsallis_2000}.
However, even though the objective functional is strictly concave
(here for $a\leq\lambda$), the constraints set is not convex and
the uniqueness and nature of the maximum can not be guaranteed. The
situation is still more involved in the continuous setting considered
here which requires the results from calculus of variations \citep{gelfand_calculus_2000,giaquinta_calculus_1996}.
Therefore, we propose to derive the solution as consequence of the
nonnegativity of $D_{a,\lambda}(f||g)$, without recourse to the technique
of Lagrange multipliers. We obtain that the density $f(x)$ that achieves
the maximum in the right hand side of (\ref{eq:Generalized_a-lambda_problem})
is a generalized Gaussian, as given by the following proposition.
\begin{prop}
The density that achieves the maximum of the $(a,\lambda)$-Rényi
or Tsallis entropy, subject to the escort mean constraint $m_{p,a}[f]=m$,
is the generalized Gaussian \textup{
\begin{equation}
G_{\beta}(x)=\begin{cases}
\frac{1}{Z(\beta)}\left(1-\left(\lambda-a\right)\beta|x|^{p}\right)_{+}^{\frac{1}{\lambda-a}} & \text{for }\lambda\not=a\\
\frac{1}{Z(\beta)}\exp\left(-\beta|x|^{p}\right) & \text{if }\lambda=a
\end{cases}\text{ }\label{eq:qgauss_general-1}
\end{equation}
}where $\beta$ is a positive parameter. The partition function $Z(\beta)$
is given by 
\begin{alignat}{1}
 & Z(\beta)=\frac{2}{p}\left(\beta\right)^{-\frac{1}{p}}\times\nonumber \\
 & \begin{cases}
\left(a-\lambda\right)^{-\frac{1}{p}}B\left(\frac{1}{p},-\frac{1}{\left(\lambda-a\right)}-\frac{1}{p}\right) & \!\!\!\!\mbox{for }a>\lambda>a-p\\
\left(\lambda-a\right)^{-\frac{1}{p}}B\left(\frac{1}{p},\frac{1}{\left(\lambda-a\right)}+1\right) & \!\!\!\!\mbox{for }\lambda>a\\
\Gamma\left(\frac{1}{p}\right) & \!\!\!\!\text{and if }\lambda=a
\end{cases}\label{eq:GenPartitionFunction-1}
\end{alignat}
where $B(x,y)$ is the Beta function. Finally, the $a$-moment of
order $p$ is given by 
\begin{equation}
m_{p,a}[G_{\beta}]=\begin{cases}
\frac{1}{\beta p(\lambda-a)\left(\frac{a}{\lambda-a}+\frac{1}{p}+1\right)} & \text{for }\lambda-a\neq0\\
\frac{1}{\beta pa} & \text{for }\lambda=a.
\end{cases}\label{eq:generalized_nu_moment_p=00003Dalpha}
\end{equation}

\end{prop}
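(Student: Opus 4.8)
The plan is to bypass the Lagrange-multiplier machinery and run a Gibbs-type argument resting solely on the nonnegativity of $D_{a,\lambda}^{(R)}$ proved in the preceding proposition. I fix as reference density $g=G_\beta$ the candidate~(\ref{eq:qgauss_general-1}), with $\beta>0$ to be pinned down afterwards by the requirement $m_{p,a}[G_\beta]=m$, and I aim to prove $H_{a,\lambda}[f]\le H_{a,\lambda}[G_\beta]$ for every admissible $f$, that is every normalized density with $m_{p,a}[f]=m$, with equality exactly when $f=G_\beta$.

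The structural fact that makes this work is that $G_\beta$ is built so that $G_\beta(x)^{\lambda-a}$ is \emph{affine} in $|x|^p$: from~(\ref{eq:qgauss_general-1}) one has $G_\beta^{\lambda-a}=Z(\beta)^{-(\lambda-a)}\bigl(1-(\lambda-a)\beta|x|^p\bigr)_+$. The cross integral of~(\ref{eq:GeneralizedRenyiDivergence}) therefore simplifies dramatically once the constraint is used in the form $\int|x|^p f(x)^a\,\text{d}\mu(x)=m\,M_a[f]$, which is just~(\ref{eq:DefGeneralizedMoment}) with $m_{p,a}[f]=m$: for $a>\lambda$, where $G_\beta$ has full support and no truncation occurs, this gives the exact identity $\int f^a G_\beta^{\lambda-a}\,\text{d}\mu(x)=Z(\beta)^{-(\lambda-a)}M_a[f]\,\bigl(1-(\lambda-a)\beta m\bigr)$. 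Feeding this into~(\ref{eq:GeneralizedRenyiDivergence}) and separating the two $f$-dependent factors $M_a[f]^{\lambda}$ and $M_\lambda[f]^{-a}$, every term that does not involve $f$ assembles precisely into $H_{a,\lambda}[G_\beta]$, so that the divergence factorizes as $D_{a,\lambda}^{(R)}(f||G_\beta)=H_{a,\lambda}[G_\beta]-H_{a,\lambda}[f]$; the constant is identified by setting $f=G_\beta$, where the left side vanishes. Nonnegativity then delivers $H_{a,\lambda}[f]\le H_{a,\lambda}[G_\beta]$, with equality iff $f=G_\beta$ by the equality clause of the previous proposition.

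The step I expect to demand the most care is the compact-support regime $\lambda>a$, where $\bigl(1-(\lambda-a)\beta|x|^p\bigr)$ turns negative outside a bounded set and $G_\beta^{\lambda-a}$, carrying a positive exponent, vanishes there; the affine identity then degrades into an inequality. Here I would exploit that the truncation acts in the favorable direction: since $(\cdot)_+\ge(\cdot)$, the true cross integral is at least its affine value, it enters~(\ref{eq:GeneralizedRenyiDivergence}) through the positive power $\lambda$ and is multiplied by the negative prefactor $1/(a-\lambda)$, so that $0\le D_{a,\lambda}^{(R)}(f||G_\beta)\le H_{a,\lambda}[G_\beta]-H_{a,\lambda}[f]$, which again yields the bound, equality now forcing both $f=G_\beta$ and that $f$ be supported inside the support of $G_\beta$. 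No separate treatment is needed for the Tsallis entropy: $H_{a,\lambda}$ and $S_{a,\lambda}$ are monotone functions of the single quantity $M_a[f]^{\lambda}M_\lambda[f]^{-a}$, hence are maximized by the same density; and the border case $\lambda=a$ follows either from l'Hospital's rule or by rerunning the identical argument with the Kullback-Leibler divergence, for which $\log G_\beta=-\beta|x|^p-\log Z(\beta)$ is once more affine in $|x|^p$.

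It remains to make $G_\beta$ explicit, which is a direct one-dimensional integration against the Lebesgue measure. Evaluating $Z(\beta)=\int_{-\infty}^{+\infty}\bigl(1-(\lambda-a)\beta|x|^p\bigr)_+^{1/(\lambda-a)}\,\text{d}x$ by the substitution $t\propto|x|^p$ turns the normalization into a Beta integral and produces the two branches of~(\ref{eq:GenPartitionFunction-1}) according to the sign of $\lambda-a$, the condition $a>\lambda>a-p$ being exactly what renders the heavy-tailed branch convergent, while $\Gamma(1/p)$ arises as the $\lambda\to a$ Gaussian limit. The same substitution applied to $\int|x|^p G_\beta(x)^a\,\text{d}\mu(x)$ and divided by $M_a[G_\beta]$ gives~(\ref{eq:generalized_nu_moment_p=00003Dalpha}); reading that relation as a link between $\beta$ and $m$ is what guarantees a $\beta>0$ meeting $m_{p,a}[G_\beta]=m$, which closes the Gibbs argument.
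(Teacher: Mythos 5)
Your proposal is correct and follows essentially the same route as the paper: fix $g=G_\beta$ with $m_{p,a}[G_\beta]=m$, exploit the affine form of $G_\beta^{\lambda-a}$ in $|x|^p$ together with the constraint to control the cross integral, and conclude from $D_{a,\lambda}^{(\cdot)}(f\|G_\beta)\geq 0$ that $H_{a,\lambda}[f]\leq H_{a,\lambda}[G_\beta]$ with equality iff $f=G_\beta$. If anything, your treatment of the truncation in the compact-support case $\lambda>a$ is more careful than the paper's display (\ref{eq:quatorze}), whose inequality as literally written points the wrong way in that regime and is only rescued by the sign flip under the negative exponent $1/(a-\lambda)$ that you make explicit.
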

For $\lambda>a$, the density has a compact support, while for $\lambda\leq a$
it is defined on the whole real axis and behaves as a power distribution
for $|x|\rightarrow\infty.$ 
\begin{proof}
The expressions of the partition function and of the $a$-moment are
obtained by direct calculation. For the main result, let us consider
the generalized Gaussian (\ref{eq:qgauss_general-1}), and denote
$A(\beta)=1/Z(\beta)$ the inverse of the partition function. We directly
have
\begin{gather}
\int\!\ensuremath{f^{a}G_{\beta}^{\lambda-a}\text{d}\mu(x)=A(\beta)^{\lambda-a}M_{a}[f]\times\int\!\ensuremath{\left(1-(\lambda-a)\beta|x|^{p}\right)\negthinspace_{_{+}}}\frac{f^{a}}{M_{a}[f]}\text{d}\mu(x)}\!\label{eq:treize}\\
\leq A(\beta)^{\lambda-a}\!\left(1-(\lambda-a)\beta\, m_{p,a}[f]\right)M_{a}[f],\label{eq:quatorze}
\end{gather}
where we have exhibited the escort distribution $f_{a}$, where $m_{p,a}[f]$
denotes the generalized $a$-moment, and where the inequality in (\ref{eq:quatorze})
results from the fact that for $\lambda>a,$ the support of $\left(1-(\lambda-a)\beta|x|^{p}\right)_{+}$
can be smaller than the support of $f.$ From (\ref{eq:quatorze})
we also immediately get with $f=G_{\beta}$ that
\begin{equation}
M_{\lambda}[G_{\beta}]=A(\beta)^{\lambda-a}\left(1-(\lambda-a)\beta\, m_{a,p}[G_{\beta}]\right)M_{a}[G_{\beta}].\label{eq:seize}
\end{equation}
Therefore, for all distributions $f$ with a given $a$-moment $m_{p,a}[f]=m$
and for $\beta$ such that the generalized Gaussian has the same moment
$m_{p,a}[G_{\beta}]=m$, then the combination of (\ref{eq:quatorze})
and (\ref{eq:seize}) gives 
\begin{align*}
\int\ensuremath{f^{a}G_{\beta}^{\lambda-a}\text{d}\mu\,\leq} & \frac{M_{\lambda}[G_{\beta}]}{M_{a}[G_{\beta}]}M_{a}[f].
\end{align*}
Finally, the $(a,\lambda)$-Rényi divergence can be expressed as 
\begin{alignat*}{1}
D_{a,\lambda}^{(R)}(f||G_{\beta})=\log & \left(\frac{\left[\int\ensuremath{f^{a}G_{\beta}^{\lambda-a}\text{d}\mu(x)}\right]^{\lambda}}{M_{\lambda}[f]^{a}\, M_{\lambda}[G_{\beta}]^{\lambda-a}}\right)^{\frac{1}{a-\lambda}}\leq\log\left(\frac{M_{a}[G_{\beta}]^{-\lambda}M_{\lambda}[G_{\beta}]^{a}}{M_{a}[f]^{-\lambda}M_{\lambda}[f]^{a}}\right)^{\frac{1}{a-\lambda}}=H_{a,\lambda}(G_{\beta})-H_{a,\lambda}(f),
\end{alignat*}
by the definitions (\ref{eq:GeneralizedRenyiDivergence}) and (\ref{eq:GeneralizedRenyiEntropy})
of the divergence and entropy. By the nonnegativity of the divergence,
we obtain that 
\[
H_{a,\lambda}(G_{\beta})\geq H_{a,\lambda}(f)
\]
for all distributions $f$ with a given $a$-moment $m_{p,a}[f]=m_{p,a}[G_{\beta}]=m$,
and with equality iff $f=G_{\beta}$. Using the $(a,\lambda)$-Tsallis
divergence rather than Rényi's, the same result also directly follows
for the ($a,\lambda)$-Tsallis entropy. 
\end{proof}
From this result, we see that the problem (\ref{eq:Generalized_a-lambda_problem})
and its solution indeed interpolates between the maximum $q$-entropy
problem with standard constraint (\ref{eq:MaxEntPbStandardC}) and
with generalized $q$-moment constraint (\ref{eq:MaxEntPbGeneralizedC}). 
\begin{itemize}
\item For $a=q$ and $\lambda=1,$ the ($q,1$)-Rényi entropy is the standard
Rényi entropy with index $q$, whose maximum subject to a $q$-moment
constraint is attained for the generalized Gaussian (\ref{eq:qGaussGeneralizedC})
with exponent $1/(1-q).$ 
\item For $a=1$ and $\lambda=q,$ the ($1,q$)-Rényi entropy is the Rényi
entropy with index $q,$ whose maximum subject to a standard moment
constraint is the generalized Gaussian (\ref{eq:qGaussStandardC})
with exponent $1/(q-1).$
\end{itemize}
It shall be also mentioned that we still get a $q\rightarrow1/q$
duality result in this setting. From the definition (\ref{eq:GeneralizedRenyiEntropy}),
we always have the identity $H_{a,\lambda}[f]=H_{\frac{a}{\lambda}}[f_{\lambda}]=H_{\frac{\lambda}{a}}[f_{a}]$.
In the other hand, the generalized $a$-moment $m_{p,a}[f]$ can be
considered as the standard moment of density $f_{a}$, $m_{p,a}[f]=m_{p,1}[f_{a}]$
but also as the escort moment of order $a/\lambda$ of the density
$f_{\lambda},$ i.e. $m_{p,a}[f]=m_{p,\frac{a}{\lambda}}[f_{\lambda}]$.
Therefore, the problem (\ref{eq:Generalized_a-lambda_problem}) can
be recasted in the two equivalent forms

\begin{equation}
H_{a,\lambda}(m)=\max_{f}\left\{ H_{\frac{\lambda}{a}}[f_{a}]:\, m_{p,1}[f_{a}]=m\,\text{ and }m_{0,1}[f]=1\right\} \label{eq:Generalized_a-lambda_problem-2}
\end{equation}
which is a problem with respect to the density $f_{a}$ under a standard
mean constraint, or 

\begin{equation}
H_{a,\lambda}(m)=\max_{f}\left\{ H_{\frac{a}{\lambda}}[f_{\lambda}]:\, m_{p,\mbox{\ensuremath{\frac{a}{\lambda}}}}[f_{\lambda}]=m\,\text{ and }m_{0,1}[f]=1\right\} \label{eq:Generalized_a-lambda_problem-3}
\end{equation}
which is a problem with respect to $f_{\lambda}$with a generalized
moment constraint.

\section{\label{sec:The-case-of}The case of a two-level system}

Finally, we close this paper with the example of a two-level system,
with eigenenergies 0 and 1. Although the general ($a,\lambda)$ maximum
entropy problem (\ref{eq:Generalized_a-lambda_problem}) usually do
not lead to closed-form expressions for the entropies $H_{a,\lambda}(m)$,
see e.g. \citep{bercher_entropy_2008}, we still can get an explicit
solution for the simple two-level system. For this system, the measure
at hands charges the two levels, with $\mu=\delta_{0}+\delta_{1}$
where $\delta_{x}$ denotes the Dirac mass at $x$. Then, the maximum
entropy distribution is the discrete distribution with a probability
$p$ for the excited state and ($1-p$) for the ground state, with
\[
p=\frac{\left(1+\beta\right)^{\frac{1}{\lambda-a}}}{1+\left(1+\beta\right)^{\frac{1}{\lambda-a}}}.
\]
 In turn, the ($a,\lambda)$-entropy is then given by 
\[
H_{a,\lambda}[G_{\beta}]=\frac{1}{\lambda-a}\log\left[p^{a}+(1-p)^{a}\right]^{\lambda}\left[p^{\lambda}+(1-p)^{\lambda}\right]^{-a},
\]
and the internal energy, $m,$ which is computed with respect to the
escort distribution of order $a$ is simply 
\[
m=\frac{p^{a}}{p^{a}+(1-p)^{a}}.
\]
With a little algebra, it is possible to express $p$ as a function
of $m$:
\[
p=\frac{m^{\frac{1}{a}}}{m^{\frac{1}{a}}+\left(1-m\right)^{\frac{1}{a}}},
\]
and then we can get the expression of the ($a,\lambda)$-entropy:
\[
H_{a,\lambda}(m)=\frac{a}{a-\lambda}\log\left[m^{\frac{\lambda}{a}}+(1-m)^{\frac{\lambda}{a}}\right].
\]
From this expression, it is easy to derive an expression of the inverse
temperature as the derivative with respect to $m,$ or the expression
of the related heat capacity. We also see that the subsequent entropy
only depends on the ratio $\lambda/a.$ Finally, for $\lambda=a,$
l'Hospital's rule gives
\[
H_{a,a}(m)=-m\log\left[m\right]-\left(1-m\right)\log\left(1-m\right),
\]
which is the Fermi-Dirac entropy.

\section{Conclusions}

In this Letter, we have suggested a possible extension of the standard
$q$-divergences and entropies by considering a combination of the
concepts of $q$-divergences and of escort distributions. This leads
to two-parameter information measures that recover the classical ones
as special cases, and coincide with some recently proposed measures
\citep{cichocki_generalized_2011}. Further work should examine the
general properties of these information measures. We have introduced
a general maximum entropy problem and derived the expression of the
canonical distribution maximizing the ($a,\lambda)$-entropy under
a generalized $a$-moment constraint. This was obtained as a consequence
of the nonnegativity of the related information divergence, without
recourse to the technique of Lagrange multipliers. This canonical
distribution is a two-parameter version of the generalized $q$-Gaussian
distribution, extending the versatility of this distribution. It includes
the standard solutions of the nonextensive thermostatistics as particular
cases. Actually, the proposed generalized maximum entropy problem
includes the standard approaches of nonextensive thermostatistics
and provides a continuum of problems and solutions between them. In
addition, this approach suggests that it is possible to adopt different
indexes for the $q$-entropy and for the escort-mean constraint.



\end{document}